 \date{}
 \newtheorem{theorem}{Theorem}[section]
\newtheorem{lemma}[theorem]{Lemma}
\newtheorem{corollary}[theorem]{Corollary}
\newenvironment{proof}[1][Proof]{\begin{trivlist}
\item[\hskip \labelsep {\bfseries #1}]}{\end{trivlist}}
 \newtheorem{observation}{Observation}
\title{Characterization of Minimum Cycle Basis in Weighted Partial 2-trees \footnotemark[1]} 
\author{N.S.Narayanaswamy \footnotemark[2]  \and G.Ramakrishna \footnotemark[2]}
\begin{document}
\maketitle 
\footnotetext[1]{A preliminary version of this paper appeared as {\em Characterization of Minimum Cycle Basis in Weighted Partial 2-trees} in the proceedings of CTW 2012}
\footnotetext[2]{Department of Computer Science and Engineering, Indian Institute of Technology Madras, India.  Email: swamy@cse.iitm.ac.in, grama@cse.iitm.ac.in}

\begin{abstract}
\noindent
For a weighted outerplanar graph, the set of lex short cycles is known to be a minimum cycle basis [Inf. Process. Lett. 110 (2010) 970-974 ]. In this work, we show that the set of lex short cycles is a minimum cycle basis in weighted partial 2-trees (graphs of treewidth two) which is a superclass of outerplanar graphs. 
\end{abstract}

\section{Introduction}
\noindent
A cycle basis is a compact description of the set of all cycles of a graph and has various applications including the analysis of electrical networks  \cite{KLM09CycleBasisSurvey}. 
Let $G =(V(G),E(G))$ be an edge weighted graph and let $m=|E(G)|$ and $n = |V(G)|$.
A \emph{cycle} is a connected graph in which the degree of every vertex is two.
An \emph{incidence vector} $x$, indexed by $E(G)$ is associated with every cycle $C$ in $G$, where for every edge $e \in E(G)$, $x_e$ is 1 if $e \in E(C)$ and 0 otherwise.
The \emph{cycle space} of $G$ is the vector space over $\mathbb{F}^{m}_{2}$ spanned by the incidence vectors of cycles in $G$. 
A \emph{cycle basis} of $G$ is a minimum set of cycles whose incidence vectors span the cycle space of $G$.
The weight of a cycle $C$ is the sum of the weights of the edges in $C$.
A cycle basis $B$ of $G$ is a \emph{minimum cycle basis} (MCB) if the sum of the weights of the cycles in $B$ is minimum.  A minimum cycle basis of $G$ is denoted by $MCB(G)$.\\

\noindent
\textbf{Motivation:}
For a weighted graph $G$,  Horton has identified a set $\mathcal{H}$ of $O(mn)$ cycles and has shown that a minimum cycle basis of $G$ is a subset of $\mathcal{H}$ \cite{Horton}. 
Liu and Lu have shown that the set of \emph{lex short cycles} (defined later) is a minimum cycle basis in  weighted outerplanar graphs \cite{Liu_Lu_WeightedOuterPlanar}. We generalize this result for partial 2-trees which is a superclass of outerplanar graphs. \\

\noindent
\textbf{Our contribution:}
The following are the main results in this work.
\begin{theorem}
\label{Theorem_LexCyclesCount}
 Let $G$ be a weighted partial 2-tree on $n$ vertices and $m$ edges. 
 Then the number of lex short cycles in $G$ is $m-n+1$.
\end{theorem}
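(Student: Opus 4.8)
The plan is to prove the statement by induction on $n$, exploiting the recursive structure of partial $2$-trees. Since no cycle can use a cut vertex twice, every cycle---and in particular every lex short cycle---lies entirely inside a single biconnected component, and for a connected graph the cyclomatic number $m-n+1$ is the sum of the cyclomatic numbers of its blocks (the cycle space is the direct sum of the cycle spaces of the blocks). A shortest path between two vertices of the same block also stays inside that block, so the lex short cycles of $G$ are exactly the lex short cycles of its blocks, with the global vertex order restricted to each block. This lets me reduce to the case where $G$ is $2$-connected; the base cases (a single edge, or a triangle) are immediate, since there $m-n+1$ equals $0$ and $1$ respectively and the lex short cycles can be counted by hand.

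For the inductive step I would use the fact that a graph of treewidth two always contains a vertex $v$ of degree at most two; in a $2$-connected partial $2$-tree with $n\ge 3$ this vertex has degree exactly two, with neighbours $x$ and $y$. I distinguish two reductions. If $xy\in E(G)$ (so $v$ is simplicial and $vxy$ is a triangle), I delete $v$: this decreases $m-n+1$ by exactly one, and I must show that the number of lex short cycles drops by exactly one as well. If $xy\notin E(G)$, I perform a series reduction, deleting $v$ and inserting the edge $xy$ of weight $w(vx)+w(vy)$; here $m-n+1$ is unchanged, and I must show that the set of lex short cycles is preserved under the natural correspondence that replaces the subpath $x\,v\,y$ by the new edge. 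Both reductions keep the graph a simple weighted partial $2$-tree (in the series case $xy$ was not previously present, so no multi-edge is created), and partial $2$-trees are closed under these minors, so induction applies after re-decomposing into blocks if necessary. Deletions of degree-one or isolated vertices change neither quantity and are disposed of directly.

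The heart of the argument, and the step I expect to be the main obstacle, is the behaviour of lex short cycles under the deletion of the degree-two vertex $v$. Because $v$ has degree two, every cycle through $v$ consists of the two edges $vx,vy$ together with an $x$--$y$ path avoiding $v$; I would prove that in the triangle case exactly one such cycle is lex short---intuitively the one bounding the triangular face, built from the lex shortest $x$--$y$ path---while every lex short cycle avoiding $v$ survives the deletion unchanged. Establishing this requires showing (i) that deleting $v$, or performing the series reduction, does not alter lex shortest paths between the surviving vertices, and (ii) that the lexicographic tie-breaking that singles out the cycles is compatible with the reduction. Step (ii) is where the precise definition of a lex short cycle, together with careful bookkeeping of the vertex order and of the combined weight $w(vx)+w(vy)$, must be invoked, and I anticipate the subtlety to lie in ruling out the creation or destruction of spurious lex short cycles when the two edges at $v$ are merged.

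Assembling the reduction rules along the series--parallel construction of $G$ then yields a telescoping count: each simplicial deletion accounts for exactly one lex short cycle and simultaneously for the unit decrease of $m-n+1$, while series reductions and degree-one or isolated deletions leave both quantities fixed. Hence the number of lex short cycles equals $m-n+1$, completing the induction.
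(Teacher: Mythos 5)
Your overall strategy differs fundamentally from the paper's: the paper inducts on the number of $K_{2,3}$-subdivisions, splitting $G$ at the two branch vertices $\{u,v\}$ of such a subdivision (which separate $G$ into at least three components) into two pieces that overlap exactly in $lsp(u,v)$, and it uses the known outerplanar count of Liu and Lu as the base case. You instead attempt a self-contained induction via series--parallel reductions at a degree-two vertex. That would be an attractive elementary alternative, but the lemma you yourself isolate as ``the heart of the argument'' is false as stated, so this is a genuine gap rather than an unwritten verification.

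Concretely, take $G = K_4$ minus the edge $vz$, on vertices ordered $v<x<y<z$, with $w(vx)=w(vy)=w(xz)=w(yz)=1$ and $w(xy)=10$. Here $v$ is a simplicial degree-two vertex with $xy\in E(G)$. The lex shortest $x$--$y$ path is $x,v,y$: it ties with $x,z,y$ in weight and edge count and wins the third tie-break because $v<z$. One checks that both cycles through $v$ --- the triangle $v,x,y$ and the four-cycle $v,x,z,y$ --- are lex short, while the triangle $x,y,z$ is not, since it misses $lsp(x,y)$. So ``exactly one cycle through $v$ is lex short'' fails (there are two), and after deleting $v$ the triangle $x,y,z$, which was not lex short, becomes the unique (hence lex short) cycle of $G-v$, so the claim that lex short cycles avoiding $v$ are exactly preserved also fails. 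The total does drop by one, but only because $2-1=1$; your proposed one-for-one accounting does not establish this. The underlying problem is that deleting or contracting a degree-two vertex changes which paths are lex shortest elsewhere and perturbs the edge-count and vertex-set tie-breakers; the same issue threatens your series-reduction case. To rescue the approach you would have to either prove that a degree-two vertex with the desired one-for-one behaviour always exists (in the example above, $z$ works but $v$ does not) or compute the net change in $|LSC|$ without a cycle-by-cycle correspondence; neither is routine. The paper sidesteps all of this by splitting along a separator $\{u,v\}$ so that the two resulting subgraphs share exactly $lsp(u,v)$, which leaves every lex shortest path intact inside the piece that contains it.
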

\begin{theorem}
\label{Theorem_mainResult}
 For a weighted partial 2-tree $G$, the set of lex short cycles is a minimum cycle basis.
\end{theorem}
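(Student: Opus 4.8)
The plan is to treat the cycle space as a matroid and reduce the claim to a minimum-weight-basis statement that can be checked by a greedy/exchange criterion. Recall that, over $\mathbb{F}_2$, the incidence vectors of cycles form a vector matroid of rank $m-n+1$ (assuming $G$ connected; the disconnected case is handled component-wise), and a minimum cycle basis is exactly a minimum-weight basis of this matroid. By the standard matroid optimality condition, a basis $C_1,\dots,C_\nu$ with $w(C_1)\le\cdots\le w(C_\nu)$ is minimum if and only if, for every $i$, $C_i$ is a minimum-weight cycle outside the span $\langle C_1,\dots,C_{i-1}\rangle$; equivalently, a linearly independent family $L$ of $\nu=m-n+1$ cycles is an $MCB$ iff every cycle $C$ of $G$ admits a representation $C=\sum_{L'\in S}L'$ with $S\subseteq L$ and $w(L')\le w(C)$ for all $L'\in S$. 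Since Theorem~\ref{Theorem_LexCyclesCount} already supplies exactly $m-n+1$ lex short cycles, matching the rank, the whole proof splits into two tasks: establishing linear independence (to promote the lex short cycles to a basis) and verifying the weighted-representation criterion above (to promote the basis to a \emph{minimum} one).

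First I would settle independence. Here the idea is that the lex tie-breaking assigns to each lex short cycle a distinguishing (``private'') edge that is not carried by any lex-smaller short cycle, so that ordering the cycles by the lex order makes their incidence vectors triangular; this forces linear independence and, with the count from Theorem~\ref{Theorem_LexCyclesCount}, yields that the lex short cycles form a cycle basis of $G$. This part is essentially combinatorial bookkeeping on the definition of the lex order and should parallel the argument already implicit in the outerplanar case of Liu and Lu~\cite{Liu_Lu_WeightedOuterPlanar}.

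The substance of the theorem is the weighted-representation criterion, which I would prove by a peeling (exchange) argument exploiting the series-parallel structure of partial $2$-trees. Given an arbitrary cycle $C$ that is not itself a lex short cycle, the goal is to locate a lex short cycle $L$ with $w(L)\le w(C)$ such that $C\triangle L$ is a disjoint union of strictly ``simpler'' cycles, each of weight at most $w(C)$, and then recurse; since every reduction strictly decreases a suitable complexity measure and never increases weight, iterating expresses $C$ as a sum of lex short cycles of weight $\le w(C)$. To find the cycle $L$ and to control the weights through the symmetric difference, I would use the $2$-separators of $G$: a partial $2$-tree decomposes recursively along size-$2$ separators, any cycle meeting both vertices of such a separator can be cut and re-routed along a lightest connecting path, and the lex-short property guarantees that the local ``innermost'' cycle being peeled is no heavier than $C$ via shortest-path / lex comparisons at the separator. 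The outerplanar case, where one simply peels off a bounded face, serves as the base case and as the guide for the local step.

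The hard part will be exactly this last step: in an outerplanar graph the planar embedding makes the peelable innermost cycle and the accompanying weight inequality transparent, whereas in a general partial $2$-tree cycles can nest and share separators in ways that have no outerplanar analogue, so the existence of a valid $L$ and the monotonicity $w(C\triangle L)\le w(C)$ (more precisely, that the components of $C\triangle L$ are all no heavier than $C$) must be extracted from the series-parallel decomposition rather than from a fixed face structure. I expect the crux to be proving the weight inequality at a $2$-separator, where one must compare the weight contributed by $C$ across the separator against the lightest available detour realizing the lex short cycle, and to show that the lex tie-breaking is consistent across the recursive pieces so that the globally defined lex short cycles restrict correctly to the subproblems. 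Once this structural weight lemma is in place, the matroid optimality criterion closes the argument and yields Theorem~\ref{Theorem_mainResult}.
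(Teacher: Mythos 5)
Your proposal outlines a genuinely different and much heavier route than the paper's, and as written it has a real gap: both of its two main steps are left as plans rather than proofs. The linear independence of the lex short cycles via ``private edges'' is asserted but not established (it is not obvious that the lex order triangularizes the incidence vectors), and you yourself flag the weighted-representation step --- the structural weight lemma at a $2$-separator and the monotonicity of the peeling --- as the unresolved crux. A proof attempt that defers its hardest lemma to future work is not a proof, so on its own terms the proposal does not yet establish the theorem.

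More importantly, you missed that the paper has already done all the work needed for a two-line argument. Lemma~\ref{Lemma_MCBsubSetOfLex} (quoted from Hartvigsen) says that $LSC(G)$ \emph{contains} some $MCB(G)$; every cycle basis of a connected graph has exactly $m-n+1$ elements; and Theorem~\ref{Theorem_LexCyclesCount} says $|LSC(G)| = m-n+1$. Hence the $MCB$ contained in $LSC(G)$ has the same cardinality as $LSC(G)$ itself, forcing $LSC(G)$ to equal that $MCB$. No independence argument, no matroid exchange criterion, and no peeling along separators is needed --- all of that difficulty is absorbed into the cited Lemma~\ref{Lemma_MCBsubSetOfLex} and into the counting theorem, whose proof (via the $decomp$ operation at $K_{2,3}$-separators) is where the series-parallel structure actually gets used. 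If you want to salvage your approach, the honest comparison is that your exchange/peeling framework would reprove Hartvigsen's lemma from scratch in the special case of partial $2$-trees, at considerable cost and with the key inequality still open.
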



\noindent
\textbf{Related work:} 
The characterization of graphs using cycle basis was initiated by MacLane \cite{MacLane37}.
In particular, MacLane showed that a graph $G$ is planar if and only if $G$ contains a cycle basis $B$ such that each edge in $G$ appears in at most two cycles of $B$. However, he referred to a cycle basis as a \emph{complete independent set of cycles}.
Formally, the concept of cycle space in graphs was introduced in \cite{Chen71} after four decades.
Later, it was characterized that a planar 3-connected graph $G$ is a Halin graph if and only if $G$ has a planar basis $B$ such that each cycle in $B$ has an external edge \cite{Syslo}.
There after, it was shown that every 2-connected outerplanar graph has a unique MCB \cite{LeydoldOuterPlanarChar}. Subsequently, it was proven that Halin graphs that are not necklaces have a unique MCB \cite{StadlerHalinCB}.

The first polynomial time algorithm for finding an MCB was given by Horton \cite{Horton}.
Since then, many improvements have taken place on algorithms related to minimum cycle basis and its variants.
A detailed survey of various algorithms, characterizations and the complexity status of cycle basis and its variants
was compiled by Kavitha et al.\cite{KLM09CycleBasisSurvey}. The current best algorithm for MCB runs in $O(m^{2}n/\log{n})$ time and is due to Amaldi et al.\cite{Amaldi10MCB}.\\

\noindent
\textbf{Graph preliminaries:}
In this paper, we consider only simple, finite, connected, undirected and weighted graphs.
We refer \cite{dbwestBook} for standard graph theoretic terminologies. 
Let $G$ be an edge weighted graph. 
Let $X \subseteq V(G)$. $G - X$ denotes the graph obtained after deleting the set of vertices in $X$ from $G$. $G[X]$ denotes the subgraph induced by vertices in $X$. $X$ is a \emph{vertex separator} if $G-X$ is disconnected. 
A \emph{component} of $G$ is a maximal connected subgraph.
$K_3$ denotes a cycle on $3$ vertices and $K_2$ denotes an edge.
$K_{2,3}$ is a complete bipartite graph $(V_1,V_2)$ such that $|V_1|=2, |V_2|=3$.
A graph is \emph{planar} if it can be drawn on the plane without any edge crossings.
A planar graph is \emph{outerplanar} if it can be drawn on the plane such that all of its vertices lie on the boundary of its exterior region. 
A 2-tree is defined inductively as follows: $K_3$ is a 2-tree; if $G'$ is a 2-tree and $G = G' \cup \{v\}$ is such that $N_{G}(v)$ forms a $K_2$ in $G$, then $G$ is a 2-tree. A graph is a \emph{partial 2-tree} if it is a subgraph of a 2-tree.
Alternatively, a graph of treewidth (defined in \cite{Robertson_Seymour_1986}) two is a \emph{partial 2-tree}.
An \emph{$H$-subdivision} (or subdivision of $H$) is a graph obtained from a graph $H$ by replacing edges with pairwise internally vertex disjoint paths.

\section{MCB in Weighted Partial 2-trees}
\label{sectionMcb}
For a weighted partial 2-tree $G$ associated with a weight function $w: E(G) \to \mathbb{N}$, we show that the set of lex short cycles (defined below) in $G$ is an $MCB(G)$.
The notion of lex shortest path and lex short cycle is presented from \cite{Hartvigsen1994AMC}.
For a totally ordered set $S$, $\min(S)$ denotes the minimum element in $S$. For a graph $G$, let $V(G)$ be a totally ordered set. A path $P(u,v)$ between two distinct vertices $u$ and $v$ is \emph{lex shortest path} if for all the paths $P'$ between $u$ and $v$ other than $P$, 
 exactly one of the following three conditions hold: 1) $w(P') > w(P)$ 2) $w(P') = w(P)$ and $|E(P')| > |E(P)|$ 3) $w(P') = w(P)$, $|E(P')| = |E(P)|$  and $\min(V(P')\setminus V(P)) > \min(V(P) \setminus V(P'))$, where $w(P) = \Sigma_{e\in E(P)}w(e)$.
  The lex shortest path between any two vertices $u$ and $v$ is unique and is denoted by $lsp(u,v)$. A cycle $C$ is \emph{lex short} if for every two vertices $u$ and $v$ in $C$, $lsp(u,v) \subset C$.   The set of lex short cycles of $G$ is denoted by $LSC(G)$. For a subgraph $G_1$ of $G$, the total order of $V(G_1)$ is the order induced by
the total order of $V(G)$. We use $lsp_{G_1}(x,y)$ to denote the lex shortest path between vertices $x$ and $y$ in $G_1$.
We use the following lemmas from the literature. 

\begin{lemma}[\cite{Hartvigsen1994AMC}]
\label{Lemma_MCBsubSetOfLex}
For a simple weighted graph $G$, $LSC(G)$ contains an $MCB(G)$.
\end{lemma}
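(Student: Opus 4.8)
The plan is to combine the greedy (matroid) construction of a minimum cycle basis with an exchange argument that rewrites any non--lex--short cycle as a sum of strictly lex--smaller even subgraphs. Recall that the cycles of $G$, viewed as incidence vectors in $\mathbb{F}_2^m$, together with linear independence form a linear matroid whose bases are exactly the cycle bases of $G$ (each has size $m-n+1$). Consequently, the standard greedy algorithm --- fix a total order on all cycles that refines increasing weight (breaking ties by the length criterion and then the vertex criterion used to define $lsp$), scan the cycles in this order, and adjoin a cycle to the basis whenever its incidence vector is independent of those already chosen --- outputs a minimum cycle basis $B^{*}$. I would first record this greedy procedure as the working characterization of an $MCB(G)$.

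The heart of the argument is the following claim: if a cycle $C$ is not lex short, then the incidence vector of $C$ lies in the span of cycles that are strictly smaller than $C$ in the above total order. To prove it, pick $u,v \in C$ with $Q := lsp(u,v) \not\subseteq C$, and split $C$ into its two internally disjoint $u$--$v$ subpaths $P_1, P_2$, so that over $\mathbb{F}_2$ we have $C = P_1 \oplus P_2 = (P_1 \oplus Q) \oplus (P_2 \oplus Q)$. Each of $P_1 \oplus Q$ and $P_2 \oplus Q$ is an even subgraph, hence an edge--disjoint union of cycles lying in the cycle space. Since $Q$ is the lex shortest $u$--$v$ path and $Q \neq P_1, Q \neq P_2$ (because $Q \not\subseteq C$ while $P_1, P_2 \subseteq C$), $Q$ is strictly lex--smaller than both $P_1$ and $P_2$; in particular $w(Q) \le w(P_1)$ and $w(Q) \le w(P_2)$. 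Cancelling the shared edges then yields $w(P_i \oplus Q) \le w(C)$ for $i=1,2$, and I would push this comparison through the length and vertex criteria to conclude that every cycle appearing in the decomposition of $P_1 \oplus Q$ or $P_2 \oplus Q$ is strictly smaller than $C$ in the total order. Thus $C$ is a sum of strictly--smaller cycles.

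Given the claim, the conclusion is immediate: when the greedy algorithm examines a non--lex--short cycle $C$, every cycle in its decomposition has already been scanned, so $C$ is dependent on the current partial basis and is rejected. Hence every cycle the greedy selects is lex short, so the minimum cycle basis $B^{*}$ satisfies $B^{*} \subseteq LSC(G)$, which is exactly the assertion that $LSC(G)$ contains an $MCB(G)$.

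I expect the main obstacle to be the weight/length/vertex bookkeeping inside the claim. The weight inequality $w(P_i \oplus Q) \le w(C)$ is immediate, but upgrading ``$\le$'' to a strict drop in the total order --- in the tie cases where $w(Q)$ equals $w(P_1)$ or $w(P_2)$ --- requires careful accounting of the edge cancellations and of the fact that $P_i \oplus Q$ may split into several cycles rather than one. Ensuring that each individual component cycle (not merely the whole even subgraph) is strictly lex--smaller than $C$ is the delicate point on which the validity of the exchange rests.
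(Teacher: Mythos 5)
The paper offers no proof of this lemma: it is imported verbatim from Hartvigsen and Mardon, so there is no in-paper argument to compare against, only the literature. Your outline is the standard route to the result --- fix a total order on all cycles refining nondecreasing weight, invoke the matroid greedy theorem to obtain a minimum cycle basis, and show that any non-lex-short cycle is a sum of strictly earlier cycles via the exchange $C = (P_1 \oplus Q) \oplus (P_2 \oplus Q)$ with $Q = lsp(u,v)$. Everything you assert up to and including $w(P_i \oplus Q) \le w(C)$ is correct, and the greedy framing (a cycle in the span of strictly earlier cycles is rejected, since greedy maintains that the selected set spans all scanned elements) is sound.

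However, the proof is not complete, and the missing piece is exactly the one you flag at the end. When $w(Q) = w(P_{3-i})$ and $Q$ is edge-disjoint from $P_i$, the even subgraph $P_i \oplus Q$ can be a single cycle of weight exactly $w(C)$, and if additionally $|E(Q)| = |E(P_{3-i})|$ it has the same number of edges as well, so the strict decrease must come entirely from the vertex criterion. To make that work you must (i) actually define a total order on cycles: the path tie-breaker $\min(V(P')\setminus V(P)) > \min(V(P)\setminus V(P'))$ does not automatically lift to a transitive total order on cycles, yet the greedy theorem needs a genuine order refining weight; and (ii) verify that each individual cycle in the decomposition of $P_i \oplus Q$, not merely the even subgraph as a whole, precedes $C$ in that order --- the multi-component case also needs care if zero-weight edges are permitted by $w: E(G) \to \mathbb{N}$, since then one component can carry all of the weight $w(C)$. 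This bookkeeping is the entire technical content of the cited lemma; writing that you ``would push this comparison through the length and vertex criteria'' asserts the decisive step rather than establishing it. As it stands, the proposal is a correct plan with a genuine gap at its core.
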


\begin{lemma}[\cite{Liu_Lu_WeightedOuterPlanar}]
\label{Lemma_lexCyclesCountOuterPlanar}
 For a simple weighted outerplanar graph $G$, $|LSC(G)| = m-n+1$.
\end{lemma}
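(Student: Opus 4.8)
The plan is to establish the stated equality through two matching inequalities, of which only one is hard. For the lower bound I would invoke Lemma~\ref{Lemma_MCBsubSetOfLex}: since $LSC(G)$ contains an $MCB(G)$, and every minimum cycle basis of a connected graph consists of exactly $m-n+1$ cycles (the dimension of its cycle space), we get $|LSC(G)| \ge m-n+1$ for free, using no planarity at all. The whole difficulty is the reverse inequality $|LSC(G)| \le m-n+1$, which is where outerplanarity must enter.

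Before attacking the upper bound I would reduce to the $2$-connected case. Every cycle lies within a single block of $G$, and any simple path between two vertices $u,v$ of a common block $B$ cannot leave $B$ (it could only exit and re-enter through a single cut vertex, violating simplicity); hence $lsp(u,v)=lsp_B(u,v)$, and a cycle of $B$ is lex short in $G$ if and only if it is lex short in $B$. As $m-n+1$ is additive over the blocks (bridges contribute $0$), it suffices to prove $|LSC(B)| = m_B - n_B + 1$ for each $2$-connected outerplanar block $B$. For such a $B$ I would fix its unique Hamiltonian cycle $H$ (the outer boundary); the remaining edges are pairwise non-crossing chords drawn inside, and if $c$ denotes their number then $m_B - n_B + 1 = c+1$, which is exactly the number of bounded faces, i.e.\ the number of nodes of the weak dual tree $T$ of $B$.

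The upper bound I would then prove by induction on $c$. The base case $c=0$ is the single cycle $H$: its only cycle is $H$, which is trivially lex short, so $|LSC(B)|=1=c+1$. For the inductive step I would pick a leaf face $f$ of $T$; it is bounded by a single chord $e=(a,b)$ together with an arc $A$ of $H$ all of whose internal vertices have degree two. Collapsing $A$ back to the edge $e$ (equivalently, deleting the degree-two internal vertices of $A$) yields a $2$-connected outerplanar graph $B'$ with $c-1$ chords, so by induction $|LSC(B')|=c$. The entire task reduces to a transfer statement: passing from $B'$ to $B$ increases the number of lex short cycles by exactly one.

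The mechanism I would exploit is that, relative to the vertices surviving in $B'$, the edge $e$ and the arc $A$ function as two parallel $a$--$b$ connectors, and since the internal vertices of $A$ have degree two, every path of $B$ between surviving vertices either avoids both connectors or traverses exactly one of them entirely. Thus the lex shortest paths of $B$ among surviving vertices should match those of $B'$ once $e$ is assigned the ``effective'' label of the lex-preferred connector among $\{e,A\}$, and I would isolate this as a separate lemma comparing $lsp_B$ and $lsp_{B'}$ pair by pair. The main obstacle is the counting that rides on top of this comparison: lex-shortness behaves \emph{non-monotonically} under the swap $e \leftrightarrow A$, so naive guesses fail. A small example shows that both a cycle through $e$ and its image through $A$ can be lex short, while the face $f$ itself need not be lex short; hence the extra cycle cannot be pinned to $f$, nor can one argue that exactly one of each swapped pair survives. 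I therefore expect the delicate part to be controlling the vertex-label tie-break (condition~$3$ in the definition of $lsp$), which is sensitive to the newly introduced internal vertices of $A$, and showing that globally the lex short cycles of $B$ and $B'$ differ in count by precisely one. Once this transfer lemma is in place, the induction closes and, combined with the lower bound, gives $|LSC(G)| = m-n+1$.
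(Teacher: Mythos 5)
First, a point of reference: the paper does not prove this lemma at all---it is imported verbatim from \cite{Liu_Lu_WeightedOuterPlanar}---so your proposal must stand on its own as a complete proof, and it does not. The easy parts are fine: the lower bound $|LSC(G)| \ge m-n+1$ via Lemma~\ref{Lemma_MCBsubSetOfLex} is correct (any cycle basis of a connected graph has exactly $m-n+1$ elements), and the reduction to $2$-connected blocks is sound, since a simple path between two vertices of a block cannot leave the block and $m-n+1$ is additive over blocks. The dual-tree setup and the base case $c=0$ are also correct. But the entire content of the lemma is concentrated in your ``transfer statement'' $|LSC(B)| = |LSC(B')|+1$, and you never prove it; you explicitly flag it as something you ``expect'' to be delicate. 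Your own observations show why every obvious route to it is closed: the extra lex short cycle cannot be pinned to the leaf face $f$ (which need not be lex short), and cycles through the chord $e$ and through the arc $A$ are not exchanged one-for-one. What remains is a global counting claim about how lex-shortness redistributes under the swap $e \leftrightarrow A$, which is essentially the lemma itself in disguise---so the induction does not close.

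The ``effective label'' mechanism you sketch does not rescue the step, and it is worth seeing concretely why. The arc $A$ has at least two edges, so $A$ is lex-preferred to $e$ only when $w(A) < w(e)$; in that case replacing $e$ by $A$ shifts the weight of every connector-using path by the constant $w(A)-w(e)$, which preserves comparisons \emph{among} connector-using paths but not comparisons against paths avoiding the connector. Moreover, the induction hypothesis applies to $B'$ as an honest weighted graph, where $e$ carries its true weight $w(e)$, one edge, and no internal labels---not the ``effective'' weight, length, and label set of $A$. Hence $lsp_{B'}$ on the surviving vertices can genuinely disagree with the projection of $lsp_B$: a cycle through $e$ may be lex short in $B'$ while neither of its two lifts (through $e$ or through $A$) is lex short in $B$, and conversely. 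Turning your plan into a proof would require a pair-by-pair analysis of how condition~$3$ of the lex order interacts with the freshly introduced labels on $A$'s internal vertices, together with a global argument that the discrepancies cancel to leave a net count of exactly one---none of which is supplied. As it stands, the proposal is a plausible reduction plus a conjecture, not a proof; if you only need the statement, citing \cite{Liu_Lu_WeightedOuterPlanar}, as the paper does, is the honest course.
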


\noindent
We present the following lemmas and theorems that are required to prove our main result.

\begin{lemma}
\label{Lemma_existenceOfH}
Let $G$ be a partial 2-tree and $\{u,v\}$ be a vertex separator in $G$. Let $P$ be the lex shortest path between $u$ and $v$.
There exist one component $H$ in $G-\{u,v\}$ such that $V(P) \cap V(H) = \emptyset$ and $E(P) \cap E(H) = \emptyset$.
\end{lemma}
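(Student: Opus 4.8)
The plan is to argue that the entire interior of the path $P$ is confined to a single component of $G-\{u,v\}$, after which any \emph{other} component serves as the desired $H$. First I would record the only structural fact actually needed: since $\{u,v\}$ is a vertex separator, $G-\{u,v\}$ has at least two components $H_1,\dots,H_k$ with $k\ge 2$. It is worth noting that this much is purely topological and does not invoke the partial $2$-tree hypothesis or any property of lex shortest paths beyond $P$ being a simple path.

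Next I would analyze the internal vertices of $P$. Writing $P = u,x_1,\dots,x_{\ell-1},v$, every $x_i$ is distinct from $u$ and $v$, and the subpath $x_1,\dots,x_{\ell-1}$ together with its edges is a connected subgraph of $G-\{u,v\}$; hence all internal vertices of $P$ lie in one and the same component, say $H_1$. I would then take $H$ to be any component other than $H_1$, which exists precisely because $k\ge 2$.

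The verification then splits into the two required disjointness claims. For vertices, $V(P)=\{u,v\}\cup\{x_1,\dots,x_{\ell-1}\}$, and none of these lie in $V(H)$ because $u,v$ were deleted and the $x_i$ all lie in $H_1\neq H$; thus $V(P)\cap V(H)=\emptyset$. For edges, every edge of $P$ is either incident to $u$ or $v$ or joins two internal vertices inside $H_1$, so in each case at least one endpoint lies outside $V(H)$, whereas every edge of $E(H)$ has both endpoints in $V(H)$; hence $E(P)\cap E(H)=\emptyset$.

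I expect the only points needing care to be the boundary situations, not any deep argument. The first is when $P$ is the single edge $uv$ (no internal vertices): then $V(P)=\{u,v\}$ and $E(P)=\{uv\}$, and since $uv$ is incident to the deleted vertices it belongs to no component, so every component qualifies as $H$. The second is being precise that $E(H)$ denotes the edges with \emph{both} endpoints inside the component, so that edges of $P$ touching $u$ or $v$ are automatically excluded. Since the lex-shortest property of $P$ is never used, the only genuine obstacle is stating carefully that a simple path's interior remains connected after deleting its two endpoints; the remainder is routine bookkeeping.
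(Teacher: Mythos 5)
Your proof is correct and takes essentially the same approach as the paper: both arguments rest on the observation that the interior of $P$ lies in at most one component of $G-\{u,v\}$, while the separator hypothesis guarantees at least two components, so some component avoids $P$ entirely. The only difference is presentational — the paper splits into cases by the length of $P$ ($|E(P)|=1$, $=2$, $\ge 3$), whereas you handle all cases uniformly by noting that the deleted-endpoints subpath is connected; neither argument uses the partial $2$-tree or lex-shortest hypotheses.
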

\begin{proof}
If $P=(u,v)$, then none of the components in $G-\{u,v\}$ contain $V(P)$ and $E(P)$. 
If $P=(u,x,v)$, then no component in $G-\{u,v\}$ contain $E(P)$ and exactly one component in  $G-\{u,v\}$ contains $x$.
If $P$ is not captured by these two cases, then $P$ has at least three edges.
If $|E(P)|\geq 3$, then exactly one component in $G-\{u,v\}$ that contains $P-\{u,v\}$.
Since $\{u,v\}$ is a vertex separator in $G$, the number of components in $G-\{u,v\}$ is at least two.  
Therefore, there exist a component $H$ in $G-\{u,v\}$ such that $V(P) \cap V(H) = \emptyset$ and $E(P) \cap E(H) = \emptyset$.
\hfill $\Box$
\end{proof}

\begin{lemma}
\label{LemmaPartial2TreeSep}
Let $G$ be a partial 2-tree that is not outerplanar.  Then there exists a $K_{2,3}(\{u,v\},\{x,y,z\})$-subdivision in $G$ such that $G-\{u,v\}$ contains at least three components.
\end{lemma}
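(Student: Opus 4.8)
The plan is to invoke the classical forbidden-structure characterizations of the two classes involved. Since $G$ is a partial $2$-tree it has treewidth two and therefore contains no $K_4$-minor; because $K_4$ is cubic, this is equivalent to saying that $G$ contains no subdivision of $K_4$. On the other hand, by the Chartrand--Harary theorem a graph is outerplanar if and only if it contains no subdivision of $K_4$ and no subdivision of $K_{2,3}$. Since $G$ is not outerplanar but has no $K_4$-subdivision, it must contain a subdivision of $K_{2,3}$. I would name its two degree-three branch vertices $u$ and $v$ and its three degree-two branch vertices $x,y,z$, which realizes the labelling $K_{2,3}(\{u,v\},\{x,y,z\})$ required by the statement.

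Such a subdivision supplies three paths $P_x,P_y,P_z$ from $u$ to $v$, through $x$, $y$ and $z$ respectively, that are pairwise internally vertex disjoint, and each of which has at least one internal vertex. After deleting $u$ and $v$, the three (nonempty) path interiors survive in $G-\{u,v\}$, and the entire claim reduces to showing that these interiors lie in three different components of $G-\{u,v\}$: this at once yields the three components. A pleasant feature of this reduction is that no extremal or minimal choice of the subdivision is needed; any $K_{2,3}$-subdivision will do.

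The crux of the argument, and the step I expect to be the main obstacle, is proving that no two of the three interiors can share a component of $G-\{u,v\}$. I would argue by contradiction. If the interiors of $P_x$ and $P_y$ lay in one component, then $G-\{u,v\}$ would contain a path $Q$ from an internal vertex $a$ of $P_x$ to an internal vertex $b$ of $P_y$; choosing $Q$ shortest, it meets $P_x$ only at $a$ and $P_y$ only at $b$. In the clean case where $Q$ also avoids $P_z$, the four vertices $u,v,a,b$ are joined by six pairwise internally disjoint paths, namely $P_x[u,a]$, $P_x[a,v]$, $P_y[u,b]$, $P_y[b,v]$, the path $P_z$ from $u$ to $v$, and $Q$ from $a$ to $b$. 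This is exactly a subdivision of $K_4$, contradicting that $G$ is a partial $2$-tree.

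The one subtlety to dispatch is the case in which $Q$ crosses the third path $P_z$. I would handle it by truncating $Q$ at its first vertex $c$ on $P_z$ (scanning from $a$): the resulting subpath $Q[a,c]$ is internally disjoint from all of $P_x,P_y,P_z$, and now $u,v,a,c$ carry six pairwise internally disjoint paths in the same fashion, with $P_y$ playing the role of the free path from $u$ to $v$. This again forces a $K_4$-subdivision and the same contradiction. Since the pair of paths was arbitrary, the three interiors lie in pairwise distinct components, so $G-\{u,v\}$ has at least three components, as required.
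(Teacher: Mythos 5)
Your proof is correct and follows essentially the same route as the paper: obtain a $K_{2,3}$-subdivision from the Chartrand--Harary characterization (using that a partial 2-tree has no $K_4$-subdivision), then show that any connection in $G-\{u,v\}$ between two of the three legs would yield a $K_4$-subdivision, a contradiction. You are in fact more careful than the paper, which merely asserts a path between two of $x,y,z$ avoiding the third and declares the six resulting paths internally disjoint; your shortest-path choice of $Q$ and the truncation at the first vertex of $P_z$ supply precisely the details the paper glosses over.
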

\begin{proof}
A graph  is outerplanar if and only if it contains no subgraph that is a subdivision of $K_4$ or $K_{2,3}$ \cite{GaryBook}.
 Since  a partial 2-tree does not contain a subdivision of $K_4$, a partial 2-tree is outerplanar if and only if it does not contain a subdivision of $K_{2,3}$. Consider a $K_{2,3}(\{u,v\},\{x,y,z\})$-subdivision in $G$. 
Assume that $G-\{u,v\}$ has at most two components.
Then there exist a path in $G-\{u,v\}$ between two vertices in $\{x,y,z\}$ which does not go through the other vertex.
Without loss of generality, we assume that $x$ and $y$ are those two vertices and $z$ is the other vertex. Such a path between $x$ and $y$ is shown as a dotted path in Figure \ref{subdivisionExample}. It follows that there are six internal vertex disjoint paths in $G$, namely $P(x,u),P(x,v),P(y,u),P(y,v),P(x,y)$ and $P(u,v)$ via $z$. 
 Thus, there is a $K_{4}$-subdivision on the vertex set $\{u,v,x,y\}$ in $G$.  This is a contradiction that $G$ is a partial 2-tree. Therefore, $\{u,v\}$ is a vertex separator in $G$ whose removal gives at least three components. $\Box$
\end{proof}

\begin{figure*}[htp!]
\centering
\includegraphics[scale=0.5]{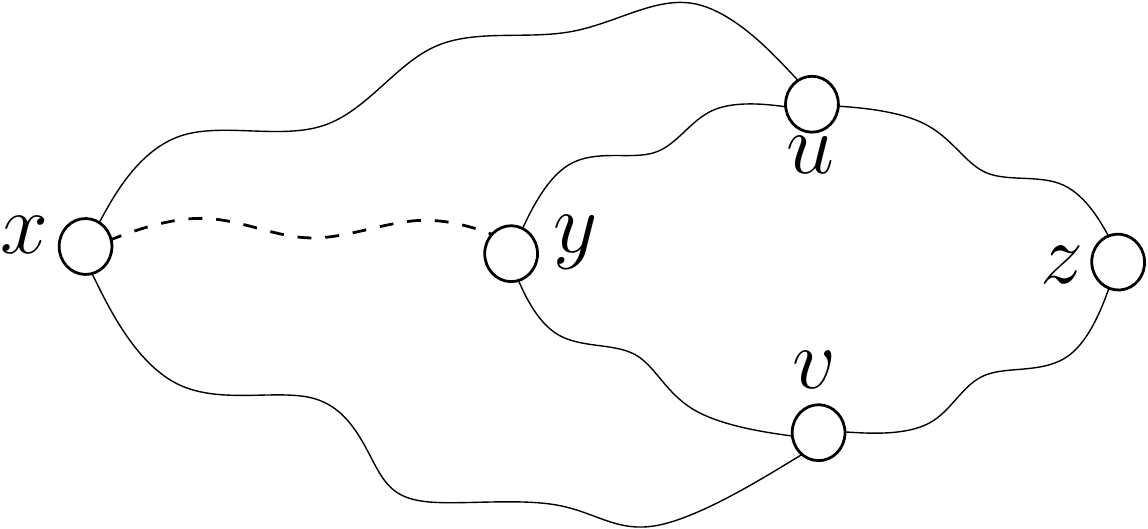}
\caption{A $K_{4}$-subdivision on the vertex set $\{u,v,x,y\}$} 
\label{subdivisionExample}
\end{figure*}

\begin{lemma}
\label{LemmaLexCycle}
Let $G'$ be a weighted subgraph of a weighted graph $G$. Let $P$ be a path and $C$ be a cycle contained both in $G$ and $G'$.\\
 $(a)$. If $P$ in $G$ is lex shortest, then $P$ in $G'$ is  lex shortest.\\
$(b)$. If $C \in LSC(G)$, then $C \in LSC(G')$.
\end{lemma}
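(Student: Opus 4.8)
The plan is to prove part $(a)$ directly from the definition and then obtain part $(b)$ as an almost immediate consequence. The central observation for $(a)$ is that the lex-shortest criterion compares a given $u$--$v$ path $P$ against a competitor $u$--$v$ path $P'$ using only data intrinsic to the two paths: the total weight $w(\cdot)$, the number of edges $|E(\cdot)|$, and the tie-breaking quantity $\min(V(P')\setminus V(P))$ versus $\min(V(P)\setminus V(P'))$. None of these quantities refers to the ambient graph, and since $G'$ is a weighted subgraph of $G$, the weight of every edge is the same in both. Hence, for any two fixed $u$--$v$ paths that both lie in $G'$ (and therefore also in $G$), the outcome of the three-way comparison is identical whether it is computed in $G$ or in $G'$.

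With this in hand, part $(a)$ follows from a restriction argument. Suppose $P$ is a $u$--$v$ path that is lex shortest in $G$. Because $G' \subseteq G$, every $u$--$v$ path $P'$ of $G'$ is also a $u$--$v$ path of $G$. Since $P$ beats every competitor in $G$ under the lex order, in particular it beats every competitor that happens to lie in $G'$; and by the observation above the comparison yields the same verdict in $G'$. Therefore $P$ satisfies the lex-shortest condition in $G'$, that is, $P = lsp_{G'}(u,v)$.

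For part $(b)$, fix any two vertices $x,y$ of $C$ and set $P := lsp_G(x,y)$. Since $C \in LSC(G)$, we have $P \subset C$; and since $C$ is contained in $G'$, the path $P$ lies in $G'$ as well. As $P$ is lex shortest in $G$ and is contained in $G'$, part $(a)$ gives that $P$ is lex shortest in $G'$, so by the uniqueness of lex shortest paths $lsp_{G'}(x,y) = P \subset C$. Since $x$ and $y$ were arbitrary vertices of $C$, every lex shortest path in $G'$ between two vertices of $C$ is contained in $C$, which is exactly the statement that $C \in LSC(G')$.

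I do not anticipate a genuine obstacle here; the only point demanding care is the very first observation, namely that the three tie-breaking conditions depend solely on the two paths being compared and not on the surrounding graph, so that passing from $G$ to the subgraph $G'$ can only remove competitors and never change the verdict on those that remain. Everything else is bookkeeping.
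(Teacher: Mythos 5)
Your proof is correct and follows essentially the same route as the paper: part $(a)$ by noting that every competitor path in $G'$ is also a competitor in $G$ and the lex comparison is intrinsic to the two paths, and part $(b)$ by using $(a)$ to conclude $lsp_{G'}(x,y)=lsp_G(x,y)\subset C$ for all $x,y\in V(C)$. Your version merely spells out the intrinsic-comparison observation that the paper leaves implicit.
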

\begin{proof}
Suppose if the path $P$ in $G'$ is not lex shortest, then the path $P$ in $G$ would not be lex shortest. Hence, the $P$ in $G'$ is lex shortest.

Recall that $C$ is a lex short cycle if for every $x,y \in V(C)$, $lsp(x,y)$ is contained in $C$. Since $C$ is in $G'$ and for every $x,y \in C$, $lsp(x,y)$ is same as $lsp_{G'}(x,y)$, $C$ is a lex short cycle in $G'$.
$\Box$
\end{proof}

\begin{lemma}
\label{Lemma_IntersectionOfLexShortPaths}
The intersection of two lex shortest paths is either empty or a lex shortest path.
\end{lemma}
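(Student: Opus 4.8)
The plan is to show that whenever the intersection is nonempty it is exactly a common subpath of the two given paths, and that this subpath is itself a lex shortest path. Write the two paths as $P_1 = lsp(u_1,v_1)$ and $P_2 = lsp(u_2,v_2)$, and regard them as subgraphs, so that their intersection has vertex set $V(P_1)\cap V(P_2)$ and edge set $E(P_1)\cap E(P_2)$. The two tools I rely on are the uniqueness of lex shortest paths (established above) and an \emph{optimal substructure} property: every subpath of a lex shortest path is itself the lex shortest path between its endpoints. I would isolate this property as a separate claim and prove it first, since the remainder of the argument is short once it is available.

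Granting optimal substructure, here is the core argument. Since $P_1$ is a simple path, order its vertices from $u_1$ to $v_1$, and let $a$ be the first and $b$ the last vertex of $P_1$ that also lies on $P_2$; both exist because $V(P_1)\cap V(P_2)\neq\emptyset$. By optimal substructure the subpath $P_1[a,b]$ equals $lsp(a,b)$, and likewise $P_2[a,b]$ equals $lsp(a,b)$; by uniqueness these two subpaths coincide, so $P_1[a,b]=P_2[a,b]$. Consequently every vertex and every edge of this subpath lies in both $P_1$ and $P_2$. Conversely, any common vertex lies on $P_1$ between $a$ and $b$ by the choice of $a$ and $b$, hence on $P_1[a,b]$, and any common edge joins two such vertices and therefore also lies on $P_1[a,b]$. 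Thus the intersection is precisely $P_1[a,b]=lsp(a,b)$, a (possibly trivial) lex shortest path, as required.

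It remains to establish optimal substructure, and this is the step I expect to be the main obstacle. Let $P = lsp(u,v)$ with $a,b\in V(P)$, and suppose for contradiction that $R:=lsp(a,b)\neq P[a,b]$, so $R$ is strictly smaller than $P[a,b]$ in the lexicographic order. Splicing $R$ in place of $P[a,b]$ produces a $u$--$v$ walk $P' = P[u,a]\cdot R\cdot P[b,v]$ that improves lexicographically on $P$; the difficulty is that $P'$ need not be simple, since $R$ may meet $P[u,a]$ or $P[b,v]$ again. I would handle this by extracting a simple $u$--$v$ path from $P'$ (deleting the edges of any closed subwalk) and checking, criterion by criterion---weight, then number of edges, then the minimum-vertex comparison---that the extracted path contradicts the lex minimality of $P$. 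The weight and edge-count cases are routine because deleting a closed subwalk cannot increase either quantity; the genuinely delicate case is the third criterion, where I must argue that the vertices exchanged into the path keep the minimum-vertex comparison favorable, using that all weight-and-length ties have already been resolved. This careful splicing analysis is the technical heart of the proof.
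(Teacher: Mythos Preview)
Your approach coincides with the paper's: both reduce the lemma to uniqueness of lex shortest paths together with the optimal-substructure property that every subpath of a lex shortest path is itself lex shortest. The paper simply asserts this property without proof and then argues by contradiction (if the intersection had two maximal common pieces with endpoints $b$ and $a'$, the subpaths of the two given paths between $b$ and $a'$ would be distinct lex shortest $b$--$a'$ paths, contradicting uniqueness). Your direct argument via the extremal common vertices $a,b$ reaches the same conclusion and is equally valid.

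The difficulty you anticipate in proving optimal substructure largely evaporates once you use that $w:E(G)\to\mathbb{N}$. In your notation, if $w(R)<w(P[a,b])$ then the spliced walk $P'$ contains a simple $u$--$v$ path of weight strictly less than $w(P)$, already a contradiction. If $w(R)=w(P[a,b])$ then $P'$ is \emph{forced} to be simple: a repeated vertex would create a closed subwalk of positive weight, and deleting it would again give a $u$--$v$ path of weight strictly less than $w(P)$. Hence in the remaining two cases $P'$ is a genuine path, and one checks directly that $V(P')\setminus V(P)=V(R)\setminus V(P[a,b])$ and $V(P)\setminus V(P')=V(P[a,b])\setminus V(R)$, so both the edge-count comparison and the minimum-vertex comparison transfer verbatim from $R$ versus $P[a,b]$ to $P'$ versus $P$. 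No delicate splicing analysis is needed.
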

\begin{proof}
Consider two lex shortest paths $lsp(x,y)$ and $lsp(u,v)$ in $G$.
Let $G'=(V(G'),E(G'))$, where $V(G') = V(lsp(x,y)) \cap V(lsp(u,v))$, $E(G')=E(lsp(x,y)) \cap E(lsp(u,v))$.
Suppose $V(G') \neq \emptyset$ and $G'$ is not a path,  then we have at least two maximal paths $P(a,b)$ and $P(a',b')$ which are common to both $lsp(x,y)$ and $lsp(u,v)$, where $b \neq a'$. Consequently, the path $P_{1}(b,a')$ contained in $lsp(x,y)$ and the path $P_{2}(b,a')$ contained in $lsp(u,v)$ are different. Since a subpath of a lex shortest path is a lex shortest path, both  $P_1$ and $P_2$ are lex shortest paths between $b$ and $a'$; a contradiction to the fact that for any two vertices in  a graph, there is a unique lex shortest path. 
$\Box$
\end{proof}

\noindent
We decompose a weighted partial 2-tree $G$ that is not outerplanar into two weighted partial 2-trees $G_1$ and $G_2$ in such a way that $LSC(G)$ is equal to the disjoint union of $LSC(G_1)$ and $LSC(G_2)$.
From Lemma \ref{LemmaPartial2TreeSep}, there exist two vertices $u,v\in V(G)$ such that $G - \{u,v \}$ is disconnected and has at least three components. 
Let $P$ be the lex shortest path between $u$ and $v$ in $G$. 
By Lemma \ref{Lemma_existenceOfH}, there exist a component $H$ in $G - \{u,v \}$  such that $V(P) \cap V(H) = E(P)\cap E(H) = \emptyset$. 
The operation $decomp(G,u,v)$ decomposes $G$ into $G_1$ and $G_2$, where 
$V(G_1) = V(H) \cup V(P)$, $E(G_1) = E(H) \cup E(P) \cup \{ (x,y) \mid x \in V(H),y \in \{u,v\} \text{ and } (x,y) \in E(G) \}$,
$G_2 = G[V(G)\setminus V(H)]$. 
An example is shown in Figure \ref{exampleFigure} to illustrate the operation $decomp(G,u,v)$ .\\

\begin{figure*}[htp!]
\centering
\includegraphics[scale=0.5]{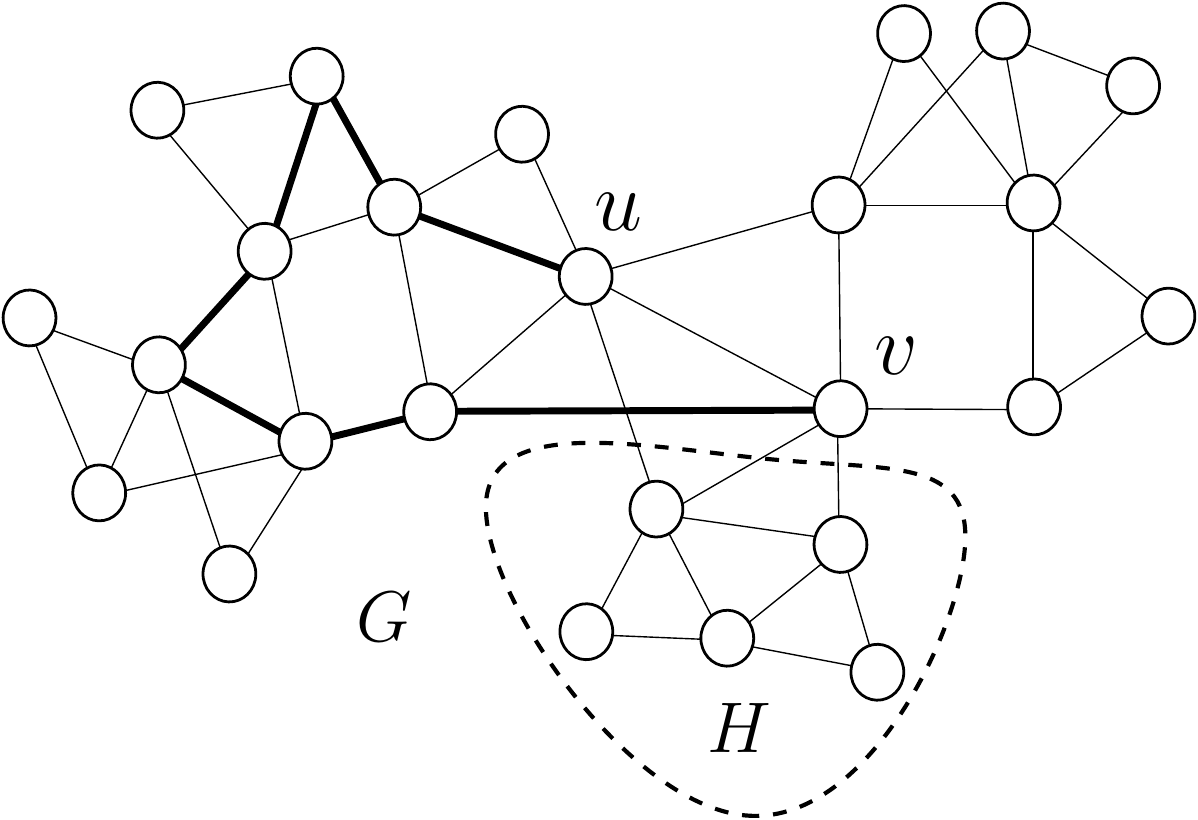}
\includegraphics[scale=0.5]{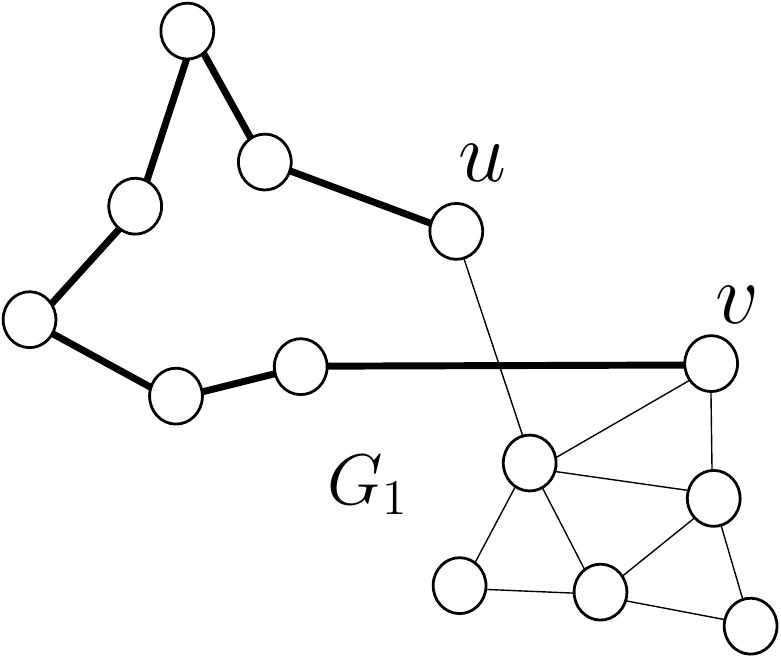}
\includegraphics[scale=0.5]{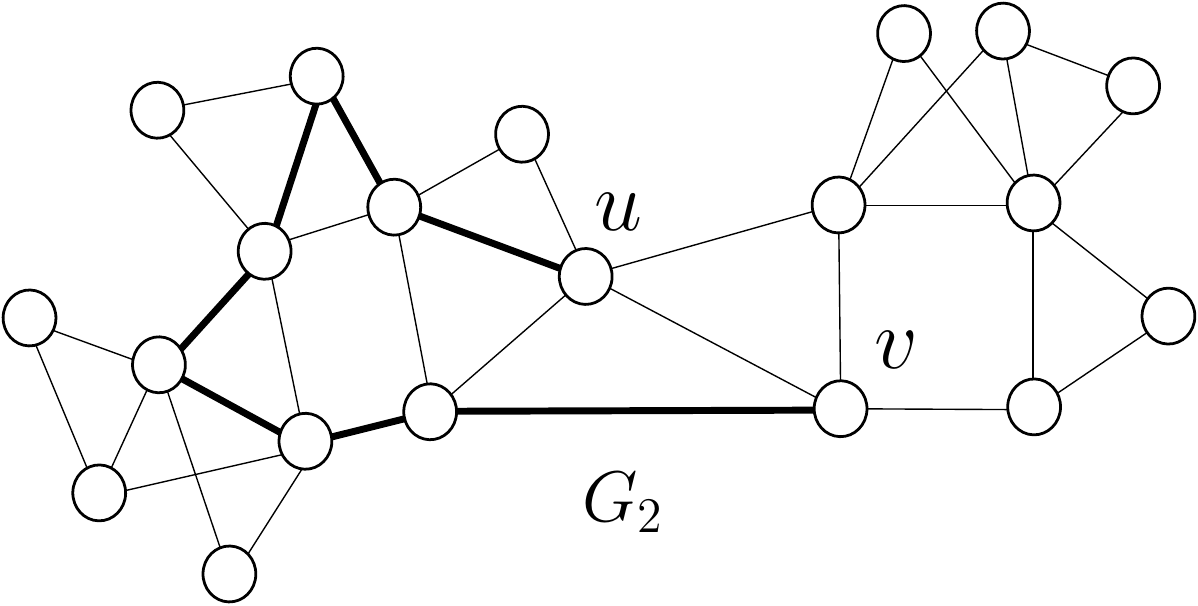}
\caption{For a weighted partial 2-tree $G$, $lsp(u,v)$ is shown in thick edges. Also $H \subset G$ is shown.  $G_1$ and  $G_2$ are the decomposed graphs of $G$.}
\label{exampleFigure}
\end{figure*}

\noindent
We use the following notation for the rest of the paper. $G$ is a weighted partial 2-tree that is not outerplanar.
$\{u,v\}$ is a vertex separator that disconnects $G$ into at least three components.
$H$ is a component in $G-\{u,v\}$ such that $V(lsp(u,v))\cap V(H) = \emptyset$ and $E(lsp(u,v)) \cap E(H) = \emptyset$.
$G_1$ and $G_2$ are the graphs obtained from the operation $decomp(G,u,v)$.\\

\noindent
From the definition of $decomp(G,u,v)$, we have the following two observations. 
\begin{observation}
\label{observation_P1} For $x,y \in V(lsp(u,v))$, $lsp(x,y)$ is in $G_i$. 
\end{observation}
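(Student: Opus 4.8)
The plan is to reduce the statement to the fact, already invoked in the proof of Lemma~\ref{Lemma_IntersectionOfLexShortPaths}, that every subpath of a lex shortest path is itself a lex shortest path. Since $x$ and $y$ both lie on $P = lsp(u,v)$, the whole argument hinges on relating $lsp(x,y)$ to the subpath of $P$ joining $x$ and $y$, and then checking that $P$ survives into each of $G_1$ and $G_2$.

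First I would write $P[x,y]$ for the subpath of $P = lsp(u,v)$ with endpoints $x$ and $y$, which is well defined because $x,y \in V(lsp(u,v))$. As $P$ is lex shortest in $G$, its subpath $P[x,y]$ is also a lex shortest path between $x$ and $y$ in $G$; by the uniqueness of lex shortest paths this forces $lsp(x,y) = P[x,y]$, and in particular $lsp(x,y) \subseteq P$. This is the substantive step, and it is essentially free given the subpath property.

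Next I would verify that $P$ is contained in both $G_1$ and $G_2$. Containment in $G_1$ is immediate from the definition of $decomp(G,u,v)$, which puts $V(P) \subseteq V(G_1)$ and $E(P) \subseteq E(G_1)$. For $G_2 = G[V(G)\setminus V(H)]$ I would use the defining property of $H$, namely $V(P) \cap V(H) = \emptyset$. This yields $V(P) \subseteq V(G)\setminus V(H) = V(G_2)$, and because $G_2$ is the subgraph \emph{induced} on $V(G_2)$ while every edge of $P$ has both endpoints in $V(P) \subseteq V(G_2)$, we also obtain $E(P) \subseteq E(G_2)$. Hence $P \subseteq G_2$ as well.

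Combining the two steps gives $lsp(x,y) = P[x,y] \subseteq P \subseteq G_i$ for each $i \in \{1,2\}$, which is exactly the claim. The only point requiring care is the edge-containment of $P$ in the induced subgraph $G_2$: one must confirm that no edge of $P$ is discarded when passing to $G[V(G)\setminus V(H)]$, and this is precisely where the disjointness $V(P)\cap V(H) = \emptyset$ enters. Beyond this routine bookkeeping I do not expect any genuine obstacle, since the essential content is already supplied by the subpath property of lex shortest paths.
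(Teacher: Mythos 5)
Your proof is correct and follows the same route as the paper, which simply invokes the fact that every subpath of a lex shortest path is a lex shortest path; you additionally spell out the (implicitly assumed) containment of $P=lsp(u,v)$ in both $G_1$ and $G_2$, which is a harmless and accurate elaboration.
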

\begin{proof}
 This observation follows, since every subpath of a lex shortest path is a lex shortest path.
 $\Box$
\end{proof}
 
 \begin{observation} 
\label{observation_P2} 
 For $x \in V(G_i)$ and $y \in V(G_i) \setminus V(lsp(u,v))$, every path $P(x,y)$ in $G$ 
 such that the internal vertices of $P$ are in $V(G_i) \setminus \{u,v\}$, is present in $G_i$.
 \end{observation}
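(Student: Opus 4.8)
The plan is to treat the two values of $i$ separately, because $G_2$ is an induced subgraph of $G$ whereas $G_1$ is not. For $i=2$, recall that $G_2 = G[V(G)\setminus V(H)]$. First I would observe that every vertex of the path $P(x,y)$ lies in $V(G_2)$: the endpoints $x,y$ lie there by hypothesis, and the internal vertices lie in $V(G_2)\setminus\{u,v\}\subseteq V(G_2)$. Since $G_2$ is induced, every edge of $G$ joining two vertices of $V(G_2)$ is present in $G_2$, so in particular all edges of $P(x,y)$ belong to $E(G_2)$ and the path lies in $G_2$. Thus the $i=2$ case is immediate and uses none of the structural input about $H$.

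For $i=1$ the argument rests on the fact that $H$ is a connected component of $G-\{u,v\}$: no edge of $G$ joins a vertex of $V(H)$ to a vertex of $(V(G)\setminus\{u,v\})\setminus V(H)$, so any connected subgraph of $G-\{u,v\}$ that meets $V(H)$ is contained in $V(H)$. Here $y$ lies in $V(G_1)\setminus V(lsp(u,v))=V(H)$ (using $V(H)\cap V(lsp(u,v))=\emptyset$), and the internal vertices of $P(x,y)$ avoid $\{u,v\}$. First I would dispose of the case $x\notin\{u,v\}$: then the whole path $P(x,y)$ lies in $G-\{u,v\}$, is connected, and meets $V(H)$ at $y$, so it lies inside $V(H)$; in particular $x\in V(H)$ and every edge of the path belongs to $E(H)\subseteq E(G_1)$. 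If instead $x\in\{u,v\}$, I would delete $x$ and apply the same reasoning to the remaining subpath, which lies in $G-\{u,v\}$ and meets $V(H)$ at $y$, hence lies in $V(H)$; this forces the neighbour of $x$ on $P(x,y)$ to be a vertex of $V(H)$, so the edge incident to $x$ has the form $\{a,b\}$ with $a\in V(H)$ and $b\in\{u,v\}$ -- exactly the edges placed in $E(G_1)$ by $decomp(G,u,v)$ -- while the remaining edges lie in $E(H)\subseteq E(G_1)$. In both cases $P(x,y)$ is contained in $G_1$.

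I expect the only genuine subtlety to be the boundary case $x\in\{u,v\}$: there I must confirm that the single edge incident to $x$ is one of the $\{u,v\}$-to-$V(H)$ edges that were explicitly added to $E(G_1)$, and not a chord from $\{u,v\}$ into $V(lsp(u,v))\setminus\{u,v\}$ (which is absent from $E(G_1)$, since among the vertices of $lsp(u,v)$ only the path edges $E(P)$ are retained). This is ruled out precisely because the rest of the path is forced into $V(H)$ by the component argument above. The same argument incidentally shows that $x$ can never lie in $V(lsp(u,v))\setminus\{u,v\}$: such a vertex sits in a different component of $G-\{u,v\}$ from $y\in V(H)$, yet would be joined to $y$ by a path avoiding $\{u,v\}$, which is impossible.
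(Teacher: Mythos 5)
Your proof is correct, but it is organized quite differently from the paper's. The paper argues by contradiction in a single stroke for both values of $i$: it takes the first edge $(a,b)$ of $P$ not lying in $E(G_i)$, asserts ``clearly $b\notin V(G_i)$,'' and concludes that $a$ must be an intermediate vertex in $\{u,v\}$, contradicting the hypothesis. You instead give a direct argument that splits on $i$ and on whether $x\in\{u,v\}$, using that $G_2$ is an induced subgraph (so vertex containment alone suffices) and that $H$ is a component of $G-\{u,v\}$ (so any connected subgraph of $G-\{u,v\}$ meeting $V(H)$ stays inside $V(H)$). Your version buys genuine rigour at the one point where the paper is loose: $G_1$ is \emph{not} an induced subgraph of $G$ --- a chord of $lsp(u,v)$, or the edge $(u,v)$ itself when it is off the path, has both endpoints in $V(G_1)$ but does not belong to $E(G_1)$ --- so the paper's step ``$(a,b)\notin E(G_i)$ implies $b\notin V(G_i)$'' is not literally true for $i=1$ and needs exactly the component argument you supply to rule out such edges appearing on $P$. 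Your closing remark, that $x$ can in fact never lie in $V(lsp(u,v))\setminus\{u,v\}$ when $i=1$, makes explicit why the problematic configuration cannot arise. The cost of your route is length and a case analysis the paper avoids; the benefit is that every implication is actually justified from the definition of $decomp(G,u,v)$.
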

 \begin{proof}
Assume that $P(x,y)$ is not in $G_i$. In the path $P$ from $x$ to $y$, let $(a,b)$ be the first edge such that $(a,b) \notin E(P)$. Clearly, $b \notin V(G_i)$. It follows that $a$ is an intermediate vertex in $P$ and $a \in \{u,v\}$; a contradiction to the premise that no intermediate vertex in $P$ belong to $\{u,v\}$. Hence, the observation.
$\Box$
\end{proof}

\begin{lemma}
\label{lemma_everyPairOfVerticesLSP}
For $i \in \{1,2\}$, for every two vertices $x,y \in V(G_i)$, $lsp(x,y)$ is in $G_i$.
\end{lemma}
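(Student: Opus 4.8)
The plan is to prove the two instances $i=2$ and $i=1$ separately, because $G_2=G[V(G)\setminus V(H)]$ is an induced subgraph of $G$ whereas $G_1$ is not, and the genuine difficulty sits entirely in the $G_1$ case. In both instances I fix $x,y\in V(G_i)$, set $Q=lsp(x,y)$, and argue that $Q$ cannot leave $G_i$. Throughout I rely on two facts already available: every subpath of a lex shortest path is again lex shortest, and the lex shortest path between two vertices is unique; these force any subpath of $Q$ joining two vertices $a,b$ to equal $lsp(a,b)$.

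For $i=2$ the argument is short. Since $G_2$ is induced, it suffices to show every vertex of $Q$ lies in $V(G_2)=V(G)\setminus V(H)$, for then every edge of $Q$ is automatically in $E(G_2)$. Suppose instead $Q$ visits a vertex of $H$. As $x,y\notin V(H)$ and $H$ is a component of $G-\{u,v\}$ whose only neighbours outside $V(H)$ are $u$ and $v$, the maximal portion of $Q$ inside $V(H)$ is flanked on both sides by vertices of $\{u,v\}$; these two flanking vertices must be distinct since $Q$ visits each vertex once, so $Q$ contains a $u$--$v$ subpath whose interior lies in $H$. That subpath is $lsp(u,v)=lsp(u,v)$, i.e.\ the path $P$, contradicting $V(P)\cap V(H)=\emptyset$. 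Hence $Q$ stays in $G_2$.

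For $i=1$ I must control vertices and edges separately, since $V(G_1)=V(H)\cup V(P)$ (a disjoint union) and $E(G_1)$ is not all induced edges. Writing $\mathrm{Out}=V(G)\setminus V(G_1)$, I first show every vertex of $Q$ lies in $V(G_1)$: a vertex of $H$ has neighbours only in $V(H)\cup\{u,v\}$, so no edge joins $V(H)$ to $\mathrm{Out}$, and therefore any maximal excursion of $Q$ into $\mathrm{Out}$ has both endpoints on $V(P)$, the only part of $V(G_1)$ adjacent to $\mathrm{Out}$. Such an excursion is a lex shortest path between two vertices of $V(P)$, which by Observation \ref{observation_P1} is a subpath of $P$ and hence stays inside $V(P)\subseteq V(G_1)$ --- contradicting that the excursion meets $\mathrm{Out}$. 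Once all vertices of $Q$ lie in $V(G_1)$, the edges follow: an edge of $Q$ incident to $V(H)$ joins $V(H)$ to $V(H)\cup\{u,v\}$ and so lies in $E(G_1)$ by the definition of $decomp(G,u,v)$, while an edge of $Q$ with both ends on $V(P)$ equals $lsp(a,b)$ for its endpoints and is therefore, by Observation \ref{observation_P1} and uniqueness, the corresponding edge of $P$, i.e.\ it lies in $E(P)\subseteq E(G_1)$.

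I expect the main obstacle to be exactly this $G_1$ case, and specifically the fact that $G_1$ is not induced while the interior of $P$ sits in a component of $G-\{u,v\}$ that also extends into $\mathrm{Out}$; this is why the clean separator argument that works for $G_2$ does not apply directly. The crucial leverage is Observation \ref{observation_P1}: it forces any detour of $Q$ off $G_1$ between two vertices of $P$ to collapse onto $P$ itself, which is what simultaneously rules out stray vertices and stray edges.
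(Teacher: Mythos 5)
Your proof is correct, but it takes a genuinely different route from the paper's. The paper cases on whether $lsp(x,y)$ meets $lsp(u,v)$: if it does, it invokes Lemma \ref{Lemma_IntersectionOfLexShortPaths} to conclude that the intersection is a single lex shortest path $lsp(a,b)$, splits $lsp(x,y)$ into the three segments $P(x,a)$, $P(a,b)$, $P(b,y)$, and disposes of the middle segment by Observation \ref{observation_P1} and the outer two by Observation \ref{observation_P2}; the argument is uniform in $i$. You instead treat $G_1$ and $G_2$ asymmetrically and run a ``maximal excursion'' argument: any attempt by $Q=lsp(x,y)$ to leave $G_i$ is bracketed by two vertices that are forced onto $\{u,v\}$ (for $G_2$) or onto $V(P)$ (for $G_1$), and the bracketed subpath, being itself a lex shortest path, must by uniqueness coincide with $P$ or with a subpath of $P$ --- contradicting that it strays. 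What the paper's version buys is modularity (it reuses the intersection lemma and the two observations, which are needed elsewhere anyway). What your version buys is self-containment --- you never need Lemma \ref{Lemma_IntersectionOfLexShortPaths} or Observation \ref{observation_P2} --- and, more importantly, explicit care with the fact that $G_1$ is \emph{not} an induced subgraph: your final step rules out an edge of $Q$ being a chord of $P$ (which lies in $E(G)$ but not in $E(G_1)$) by noting that such an edge would be a second lex shortest path between its endpoints. This is a point that the paper's treatment, which leans on Observation \ref{observation_P2}'s claim that an edge outside $E(G_i)$ must leave $V(G_i)$, passes over rather quickly, so your handling of the edge case is if anything tighter than the original.
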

\begin{proof}
If no vertex is common in $lsp(x,y)$ and $lsp(u,v)$, then from Observation \ref{observation_P2}, $lsp(x,y)$ is in $G_i$.
If at least one vertex is common in $lsp(x,y)$ and $lsp(u,v)$, then due to Lemma \ref{Lemma_IntersectionOfLexShortPaths}, $lsp(x,y) \cap lsp(u,v)$ is $lsp(a,b)$ for some $a,b \in V(lsp(u,v)) \cap V(lsp(x,y))$.
The $lsp(x,y)$ can be viewed as a union of three paths $P(x,a)$, $P(a,b)$ and $P(b,y)$. 
From Observation \ref{observation_P1}, $P(a,b)$ is contained in $G_i$.
If $x = a$, then trivially $P(x,a)$ appears in $G_i$. 
Also, if $y=b$, then clearly $P(b,y)$ appears in $G_i$.
If $x \neq a$, then $x \notin V(lsp(u,v))$.
 Similarly, if $y \neq b$, then $y \notin V(lsp(u,v))$.
From Observation \ref{observation_P2}, it follows that both $P(x,a)$ and $P(b,y)$ appear in $G_i$. 
These observations imply that $lsp(x,y)$ is in $G_i$.
$\Box$
\end{proof}

\begin{corollary}
  \label{Lemma_lexShortCycleReverseDirection}
  For $i \in \{1,2\}$, if there is a cycle $C$ in $LSC(G_i)$, then $C$ is in $LSC(G)$.
\end{corollary}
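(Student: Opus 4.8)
The plan is to reduce the statement to an equality between lex shortest paths computed in $G_i$ and in the full graph $G$. Unwinding the definitions, to show $C \in LSC(G)$ I must verify that for every pair of vertices $x,y \in V(C)$ the path $lsp_G(x,y)$ (the lex shortest path in $G$) is contained in $C$. By hypothesis $C \in LSC(G_i)$, which gives me $lsp_{G_i}(x,y) \subset C$ for every such pair. So the entire burden of the proof is to argue that $lsp_G(x,y)$ and $lsp_{G_i}(x,y)$ are the same path whenever $x,y \in V(G_i)$; once that identity is in hand the containment transfers immediately.

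First I would fix a cycle $C \in LSC(G_i)$ and an arbitrary pair $x,y \in V(C)$, noting that $V(C) \subseteq V(G_i)$. The central step is to invoke Lemma~\ref{lemma_everyPairOfVerticesLSP}: since $x,y \in V(G_i)$, the path $lsp_G(x,y)$ is entirely contained in $G_i$. This is precisely the nontrivial input that the decomposition was engineered to provide, and it is what makes the reverse direction work (the forward direction, that a lex short cycle of $G$ lying in $G_i$ is lex short in $G_i$, is the easy consequence of Lemma~\ref{LemmaLexCycle}(b)).

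Next I would upgrade this containment to an equality of paths. Because $lsp_G(x,y)$ is lex shortest in $G$ and is a path lying in the subgraph $G_i$, Lemma~\ref{LemmaLexCycle}(a) tells me that this same path is lex shortest in $G_i$. By the uniqueness of the lex shortest path between two fixed vertices, it must coincide with $lsp_{G_i}(x,y)$; that is, $lsp_G(x,y) = lsp_{G_i}(x,y)$. Combining this with the hypothesis $lsp_{G_i}(x,y) \subset C$ yields $lsp_G(x,y) \subset C$. Since $x$ and $y$ were an arbitrary pair of vertices of $C$, the cycle $C$ satisfies the defining property of a lex short cycle in $G$, so $C \in LSC(G)$.

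I do not expect a genuine obstacle here, since the decomposition lemmas have already done the real work; the only point demanding care is the logical bookkeeping of distinguishing $lsp_G$ from $lsp_{G_i}$ and correctly chaining Lemma~\ref{lemma_everyPairOfVerticesLSP} (containment in $G_i$) with Lemma~\ref{LemmaLexCycle}(a) and uniqueness to collapse the two into one path. If anything could go wrong it would be forgetting to confirm that $lsp_G(x,y)$ really is a path present in $G_i$ before applying Lemma~\ref{LemmaLexCycle}(a) — but that is exactly what Lemma~\ref{lemma_everyPairOfVerticesLSP} guarantees, so the argument closes cleanly.
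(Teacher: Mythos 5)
Your proposal is correct and follows exactly the paper's own argument: both use Lemma~\ref{lemma_everyPairOfVerticesLSP} to place $lsp_G(x,y)$ inside $G_i$, then Lemma~\ref{LemmaLexCycle}(a) together with uniqueness of lex shortest paths to identify it with $lsp_{G_i}(x,y)$, and conclude from the definition of a lex short cycle. Your write-up merely makes the uniqueness step more explicit than the paper does.
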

\begin{proof}
From Lemmas \ref{lemma_everyPairOfVerticesLSP} and \ref{LemmaLexCycle}.$(a)$, for every $x,y \in V(G_i)$, $lsp_{G_i}(x,y)$ and $lsp(x,y)$ are same.
Since $C \in LSC(G_i)$, for every $x,y \in V(C)$, $lsp_{G_i}(x,y)$ is contained in $C$.
Consequently, for every $x,y \in V(C)$, $lsp(x,y)$ is contained in $C$.  Hence $C \in LSC(G)$.
$\Box$
\end{proof}

\begin{theorem}
\label{Theorem_lexCyclesUnion}
 $LSC(G) = LSC(G_1) \uplus LSC(G_2)$.
\end{theorem}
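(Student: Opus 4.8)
The plan is to prove the set equality $LSC(G) = LSC(G_1) \uplus LSC(G_2)$ by establishing both inclusions, where the disjoint-union symbol $\uplus$ signals two things at once: that every lex short cycle of $G$ lands in exactly one of the two pieces, and that conversely every lex short cycle of a piece is a lex short cycle of $G$. The reverse inclusion $LSC(G_1) \cup LSC(G_2) \subseteq LSC(G)$ is already done: it is precisely Corollary~\ref{Lemma_lexShortCycleReverseDirection}. So the real content is the forward inclusion, namely that every $C \in LSC(G)$ is a lex short cycle of $G_1$ or of $G_2$, together with the disjointness claim that no cycle can belong to both.

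For the forward inclusion I would take an arbitrary $C \in LSC(G)$ and first argue that $C$ lives entirely inside one of $G_1$ or $G_2$. The decomposition splits $G$ along the separator $\{u,v\}$, with $H$ (and its edges to $u,v$) together with $lsp(u,v)$ forming $G_1$, and the rest forming $G_2$; the only vertices shared by $G_1$ and $G_2$ are those on $lsp(u,v)$, in particular $u$ and $v$. The key structural fact I would exploit is that $C$ is a \emph{lex short} cycle, so for any two of its vertices the lex shortest path between them lies inside $C$. Suppose $C$ had a vertex strictly inside $H$ and also a vertex on the $G_2$-side; then $C$, being a cycle, would have to cross the separator $\{u,v\}$ twice, so $u,v \in V(C)$, and the two $u$--$v$ arcs of $C$ would be two internally disjoint paths. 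Because $C$ is lex short, the arc of $C$ joining $u$ and $v$ must contain $lsp(u,v)$; but $lsp(u,v)$ is by construction vertex- and edge-disjoint from $H$, so the portion of $C$ inside $H$ would give a second $u$--$v$ path avoiding $lsp(u,v)$, forcing $lsp(u,v)$ to coincide with an arc of $C$ that is disjoint from $H$. I would then show this traps all of $C$ on one side, contradicting the assumption that $C$ meets $H$ and $G_2$ simultaneously. Hence $C \subseteq G_1$ or $C \subseteq G_2$.

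Once $C$ is confined to $G_i$, I would upgrade "$C$ is a subgraph of $G_i$" to "$C \in LSC(G_i)$" using Lemma~\ref{LemmaLexCycle}.$(b)$: since $G_i$ is a subgraph of $G$ and $C \in LSC(G)$ with $C$ contained in $G_i$, lex shortness transfers downward. This gives the forward inclusion. For disjointness, I would observe that a cycle lying in both $G_1$ and $G_2$ would have all its vertices in $V(G_1) \cap V(G_2) = V(lsp(u,v))$, hence would be a cycle inside the single path $lsp(u,v)$, which is impossible since a path contains no cycle.

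The main obstacle I anticipate is the confinement step: rigorously ruling out a lex short cycle that straddles the separator. The delicate point is handling the case analysis on how $C$ interacts with $lsp(u,v)$ and with $H$ --- in particular using the uniqueness of lex shortest paths (and Lemma~\ref{Lemma_IntersectionOfLexShortPaths}) to show that the ``through $H$'' arc of a straddling cycle cannot be a lex shortest path, so that $C$ cannot be lex short unless it stays on one side. Getting the logic airtight here, rather than the subsequent transfer lemmas, is where the care is needed.
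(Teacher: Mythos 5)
Your proposal is correct and follows essentially the same route as the paper: the reverse inclusion via Corollary~\ref{Lemma_lexShortCycleReverseDirection}, the forward inclusion by using the separator $\{u,v\}$ together with the fact that a lex short cycle through $u$ and $v$ must contain $lsp(u,v)$ (which avoids $H$) to confine $C$ to one $G_i$, and then Lemma~\ref{LemmaLexCycle}.$(b)$ to conclude $C\in LSC(G_i)$. The only cosmetic differences are that you organize the confinement step as a contradiction about straddling cycles rather than the paper's case split on $|V(C)\cap\{u,v\}|$, and you invoke Lemma~\ref{Lemma_IntersectionOfLexShortPaths} where the definition of lex shortness already suffices.
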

\begin{proof}
 Since $E(G_1) \cap E(G_2)$ is $E(lsp(u,v))$, $LSC(G_1)$ and $LSC(G_2)$ are disjoint.
We now prove that $LSC(G) \subseteq LSC(G_1) \uplus LSC(G_2)$.
Let $C \in LSC(G)$. If $C$ contains at most one vertex from $\{u,v\}$, then $C$ is contained either in $G_1$ or $G_2$, because $\{u,v\}$ is a vertex separator.
Consider the other case when $C$ contains both $u$ and $v$.
Since $C \in LSC(G)$, $C$ contains $lsp(u,v)$. Note that $lsp(u,v)$ is contained both in $G_1$ and $G_2$.
Observe that $E(C) \setminus E(lsp(u,v))$ belongs to $E(G_i)$ for some $i \in \{1,2\}$, because $E(G_1) \cap E(G_2) = E(lsp(u,v))$.
Hence, $C$ is either in $G_1$ or $G_2$.
In both of the cases, by Lemma \ref{LemmaLexCycle}.$(b)$, $C$ is either in $LSC(G_1)$ or $LSC(G_2)$. Therefore, $LSC(G) \subseteq LSC(G_1) \uplus LSC(G_2)$.
From Corollary \ref{Lemma_lexShortCycleReverseDirection}, $LSC(G_1) \uplus LSC(G_2) \subseteq LSC(G)$. Hence, $LSC(G) = LSC(G_1) \uplus LSC(G_2)$.
 $\Box$
\end{proof}

\begin{lemma}
\label{LemmaNumberOfMinors}
The number of $K_{2,3}$-subdivisions in each of $G_1$ and $G_2$ is less than the number of $K_{2,3}$-subdivisions in $G$.
\end{lemma}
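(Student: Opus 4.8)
The plan is to exploit that both $G_1$ and $G_2$ are subgraphs of $G$, so that the number of $K_{2,3}$-subdivisions can only decrease under the decomposition, and then to produce a single $K_{2,3}$-subdivision of $G$ that survives in neither $G_1$ nor $G_2$. First I would record the monotonicity: every edge of $E(G_1) = E(H)\cup E(P) \cup \{(x,y)\mid x\in V(H),\, y\in\{u,v\},\, (x,y)\in E(G)\}$ is an edge of $G$, and $G_2 = G[V(G)\setminus V(H)]$ is an induced subgraph of $G$, so $G_1\subseteq G$ and $G_2\subseteq G$. Since a $K_{2,3}$-subdivision is merely a particular kind of subgraph, any $K_{2,3}$-subdivision contained in $G_i$ is also contained in $G$, and distinct ones stay distinct; hence the count in $G_i$ is at most the count in $G$. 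It then remains only to make each inequality strict, i.e. to exhibit a $K_{2,3}$-subdivision of $G$ that is not contained in $G_i$.

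For the witness I would return to the $K_{2,3}(\{u,v\},\{x,y,z\})$-subdivision $S$ supplied by Lemma \ref{LemmaPartial2TreeSep}, whose branch vertices $x,y,z$ lie in three distinct components of $G-\{u,v\}$. Because $P = lsp(u,v)$ is a single $u$–$v$ path, its internal vertices form one connected piece of $G-\{u,v\}$ and therefore meet at most one component; consequently at least two of $x,y,z$ — say $x$ and $y$, lying in components $H_x$ and $H_y$ — sit in components disjoint from $P$. I would then take the decomposition's component $H$ to be $H_x$ (a legitimate choice, since $H_x$ is $P$-disjoint). Now $S$ contains $x\in V(H)$, so $S\not\subseteq G_2 = G-V(H)$; and $S$ contains $y\in V(H_y)$ with $V(H_y)\cap V(G_1)=V(H_y)\cap(V(H)\cup V(P))=\emptyset$, because $H_y$ is a component different from $H=H_x$ and is disjoint from $P$, so $S\not\subseteq G_1$. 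Thus one and the same subdivision $S$ witnesses strictness for both $G_1$ and $G_2$.

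The step I expect to be the real obstacle is the choice of $H$. The decomposition is only guaranteed a $P$-disjoint component, and an arbitrary such component need not participate in any $K_{2,3}$-subdivision at all (for instance a pendant attached to $u$), in which case deleting it to form $G_2$ would leave the count unchanged and the claimed strict inequality would fail. The argument above sidesteps this by selecting $H$ to be one of the branch-vertex components $H_x$ of $S$, which is possible precisely because $P$ meets at most one component while $S$ furnishes three. I would therefore make sure the surrounding setup fixes $H$ in this way — or, equivalently, verify that the component produced by Lemma \ref{Lemma_existenceOfH} may always be taken to be a branch-vertex component of the $K_{2,3}$-subdivision — so that the single subdivision $S$ is simultaneously absent from $G_1$ and from $G_2$.
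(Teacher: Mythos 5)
Your proposal is correct and follows the same basic strategy as the paper's own proof: both use the $K_{2,3}(\{u,v\},\{x,y,z\})$-subdivision guaranteed by Lemma~\ref{LemmaPartial2TreeSep} as a single witness that is destroyed in both $G_1$ and $G_2$, combined with the monotonicity fact (implicit in the paper, explicit in your write-up) that every $K_{2,3}$-subdivision of a subgraph of $G$ is a $K_{2,3}$-subdivision of $G$. The substantive difference is that the paper simply writes ``without loss of generality, assume that $x\in V(H)$,'' whereas you correctly observe that this is not a harmless relabelling: $H$ is only guaranteed to be \emph{some} component of $G-\{u,v\}$ disjoint from $P=lsp(u,v)$, and such a component need not contain any branch vertex of the subdivision (your pendant example), in which case $G_2=G[V(G)\setminus V(H)]$ retains every $K_{2,3}$-subdivision of $G$ and the claimed strict inequality fails. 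Your repair --- note that the interior of $P$ meets at most one component while $x,y,z$ lie in three distinct ones, so at least two branch-vertex components are $P$-disjoint, and then \emph{fix} $H$ to be one of them --- is exactly what is needed to legitimize the paper's ``WLOG,'' and, as you say, the choice must be pushed back into the definition of $decomp(G,u,v)$, since the induction in the proof of Theorem~\ref{Theorem_LexCyclesCount} depends on the strict decrease for every branch of the recursion. The one point you leave implicit is that $x,y,z$ really do lie in three distinct components of $G-\{u,v\}$; Lemma~\ref{LemmaPartial2TreeSep} does not state this literally, but it follows from the same $K_4$-subdivision argument used in its proof, and the paper relies on it tacitly as well.
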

\begin{proof}
Recall that there is a $K_{2,3}(\{u,v\},\{x,y,z\})$-subdivision in $G$, and $G_1$ and $G_2$ are obtained from $decomp(G,u,v)$.  Without loss of generality, assume that $x \in V(H)$. Then at most one vertex from $\{y,z\}$ is in $G_1$. Further, observe that $x$ is not in $G_2$. Therefore,  no $K_{2,3}(\{u,v\},\{x,y,z\})$-subdivision exist in $G_1$ and $G_2$. Thus the lemma holds.
$\Box$
\end{proof}

\noindent
\textbf{Proof of Theorem \ref{Theorem_LexCyclesCount}}
\begin{proof}
 We use induction on the number of $K_{2,3}$-subdivisions  in $G$. If the number of $K_{2,3}$-subdivisions in $G$ is zero, then $G$ is outerplanar, since $G$ is a partial 2-tree. From Lemma \ref{Lemma_lexCyclesCountOuterPlanar}, $|LSC(G)| = m-n+1$ when $G$ is outerplanar. Hence base case is true. If $G$ is not an outerplanar graph, then there exists a $K_{2,3}(\{u,v\}, \{x,y,z\})$-subdivision in $G$. 
 From Lemma \ref{LemmaPartial2TreeSep}, $G-\{u,v\}$ is disconnected and contains at least three components. Let $P$ be the $lsp(u,v)$ in $G$ and $k=|V(P)|$. We apply $decomp(G,u,v)$ to obtain $G_1$ and $G_2$ from $G$.
  For $i\in\{1,2\}$, $m_i$ and $n_i$ indicate $|E(G_i)|$ and $|V(G_i)|$, respectively. Now, we can apply induction hypothesis due to Lemma \ref{LemmaNumberOfMinors}.   By induction hypothesis, it follows that $|LSC(G_i)| = m_i - n_i +1$ for $i \in \{i,2\}$. As $P$ is present in $G_1$ and $G_2$, it follows that $n_1 + n_2 = n+k$ and $m_1 + m_2 = m + k -1$. 
 From Theorem \ref{Theorem_lexCyclesUnion}, $LSC(G) = LSC(G_1) \uplus LSC(G_2)$. Hence $|LSC(G)|=|LSC(G_1)|+|LSC(G_2)|$ $=m_1 - n_1 + 1 $ $+ m_2 -n_2 + 1 =m-n+1$. Therefore, $|LSC(G)| = m-n+1$.
$\Box$
\end{proof}

\noindent
\textbf{Proof of Theorem \ref{Theorem_mainResult}}
\begin{proof}
For a simple weighted graph $G$, from Lemma \ref{Lemma_MCBsubSetOfLex}, an $MCB(G) \subseteq LSC(G)$. 
The cardinality of any cycle basis in a graph is known to be $m-n+1$.  For a weighted partial 2-tree $G$, by Theorem \ref{Theorem_LexCyclesCount}, we have $|LSC(G)|=m-n+1$. Therefore, the set of lex short cycles is a minimum cycle basis in weighted partial 2-trees.
$\Box$
\end{proof}

\begin{figure*}[h!]
\centering
\includegraphics[scale=0.3]{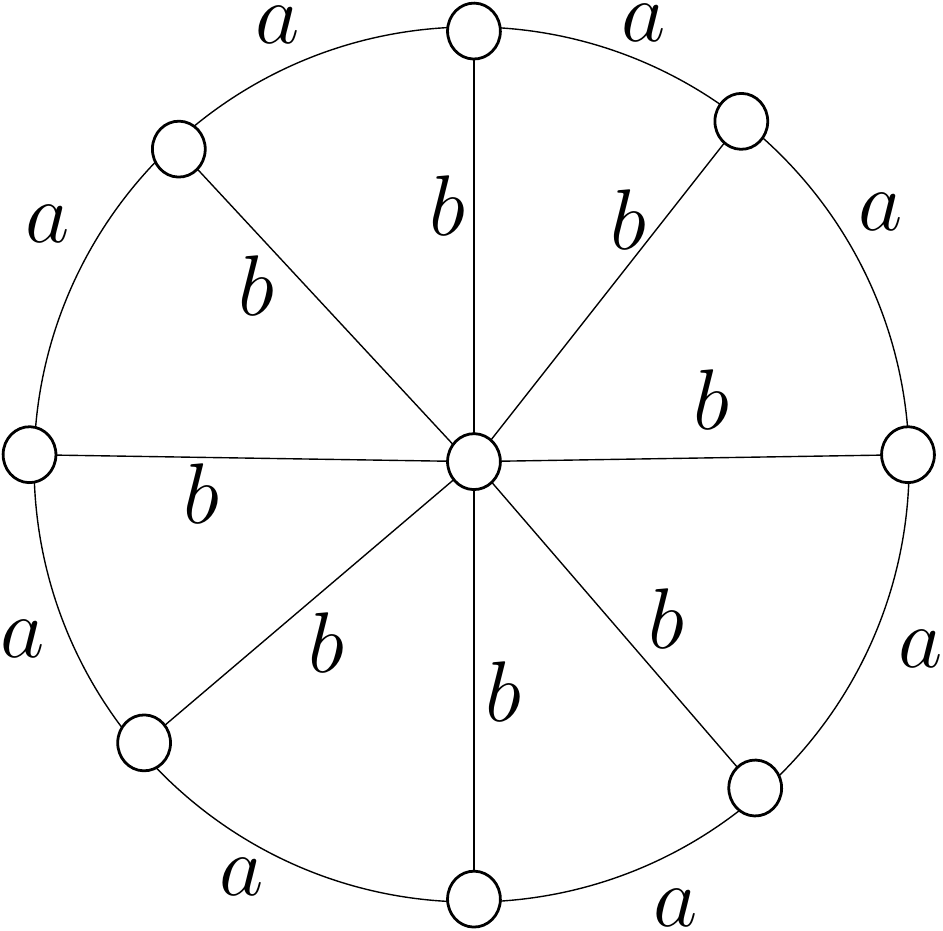}
\caption{For the wheel graph shown, if $b \gg a$, then the set of all triangles and the exterior face are lex short cycles.}
\label{counterExampleFigure}
\end{figure*}

\noindent
We present a family of partial 3-trees for which the set of lex short cycles is not a cycle basis, whose construction is as follows: Let $G_{n}= K_1 + C_{n-1}$ be a wheel graph on $n$ vertices, where  $n \geq 4$.
 A wheel graph on 9 vertices is depicted in Figure \ref{counterExampleFigure}.
Note that $G_n$ is planar.  For every edge $e\in E(G_n)$, assign $w(e)=a$ if $e$ is in external face, otherwise $w(e) = b$, where $a,b \in \mathbb{N}$ and $b \gg a$. Since every face in $G_n$ is a lex short cycle, $|LSC(G_n)|=m-n+2$ by Euler's formula. Therefore, $LSC(G_n)$ can not be a cycle basis.
\bibliographystyle{plain}

\bibliography{sp}

\end{document}